\newcommand*\diff{\mathop{}\!\mathrm{d}}
\newtheorem*{claim}{Claim}
\newtheorem*{dconj}{DRoP Conjecture}
\begin{document}
\title{Multidimensional super- and subradiance in waveguide quantum electrodynamics}
\author{Fatih Dinc}
\email{fdinc@stanford.edu}
\affiliation{Department of Applied Physics, Stanford University, Stanford, CA 94305, USA}
\affiliation{Perimeter Institute for Theoretical Physics, Waterloo, Ontario, N2L 2Y5, Canada}
\author{Lauren E. Hayward}
\affiliation{Perimeter Institute for Theoretical Physics, Waterloo, Ontario, N2L 2Y5, Canada}
\author{Agata M. Bra\'nczyk}
\affiliation{Perimeter Institute for Theoretical Physics, Waterloo, Ontario, N2L 2Y5, Canada}

\begin{abstract}
We study the collective decay rates of multi-dimensional quantum networks in which one-dimensional waveguides form an intersecting hyper-rectangular lattice, with qubits located at the lattice points. We introduce and motivate the \emph{dimensional reduction of poles} (DRoP) conjecture, which identifies all collective decay rates of such networks via a connection to waveguides with a one-dimensional topology (e.g. a linear chain of qubits). Using DRoP, we consider many-body effects such as superradiance, subradiance, and bound-states in continuum in multi-dimensional quantum networks. We find that, unlike one-dimensional linear chains, multi-dimensional quantum networks have superradiance in distinct levels, which we call multi-dimensional superradiance. Furthermore, we generalize the $N^{-3}$ scaling of subradiance in a linear chain to $d$-dimensional networks.
\end{abstract}

\maketitle
\section{Introduction}
Quantum networks composed of many nodes and channels~\cite{cirac1997quantum} hold remarkable promise for quantum computation~\cite{nielsen2010quantum,sipahigil2016integrated,arute2019quantum}, memory~\cite{lvovsky2009optical}, communication~\cite{shomroni2014all,zhou2013quantum},  sensing~\cite{degen2017quantum}, and ultimately, the quantum internet~\cite{kimble2008quantum}. Such networks could  be realized naturally in waveguide quantum electrodynamics (QED) systems, i.e., systems in which photons interact with quantum emitters inside waveguides \cite{sato2012strong,van2013photon,thompson2013coupling,goban2014atom,goban2015superradiance,schoelkopf2008wiring,ritter2012elementary,corzo2019waveguide,tiecke2014nanophotonic}. 

Given this potential for realizing quantum networks in waveguide QED systems, it is perhaps surprising that the theoretical study of many-body effects in waveguide QED has been mostly confined to single \cite{shen2005coherent,shen2007strongly,chang2007single,shen2009theory,baragiola2012n,zhou2017single,calajo2019exciting,1dchain} (see, for example, Fig.~\ref{fig:fig0}a) or coupled \cite{cheng2016coherent,zhou2013quantum,lu2014single,brod2016two,combes2018two,xu2013collective} waveguides with linear topologies. Nonetheless, a multi-dimensional quantum network, as idealized in Figs.~\ref{fig:fig0}b--c (or Fig.~1a of Ref.~\cite{kimble2008quantum}), provides compactness and higher connectivity within the network with its many nodes and channels. Yet, such a fruitful concept has been untouched in the waveguide QED literature so far, perhaps due to the inefficiency of current analytical and computational techniques for generalizing to larger dimensions. On the other hand, a systematic theoretical study of many-body effects in large multi-dimensional networks will pave the way for developing complex quantum networks, and ultimately, the quantum internet~\cite{kimble2008quantum}. With quantum computing becoming more of a reality with each passing day~\cite{arute2019quantum}, it is now a crucial time to address this problem.

\begin{figure}[b]
    \centering
\includegraphics[width=\columnwidth]{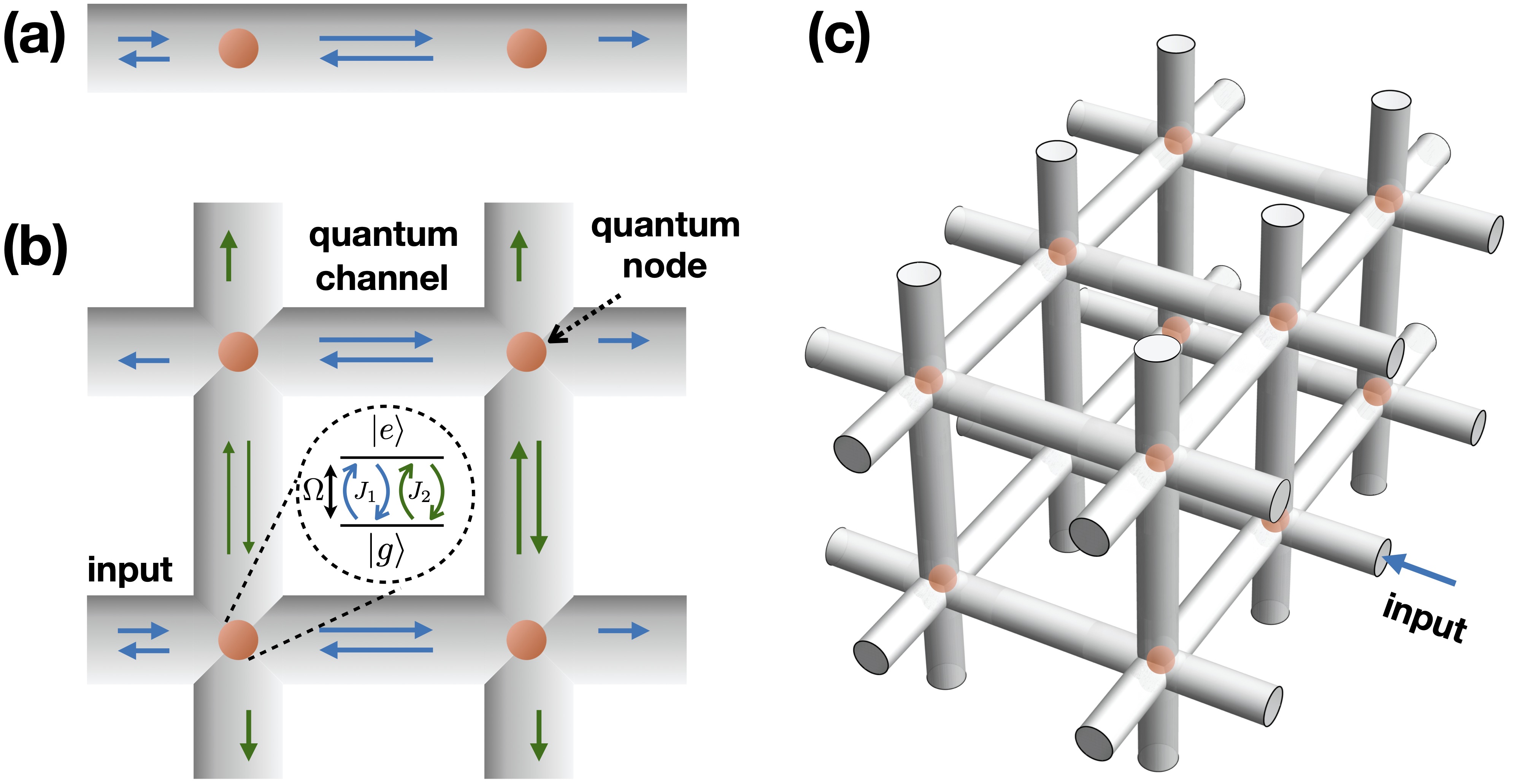}
 \caption{Schematic of quantum networks in (a) $d=1$,  (b) $d=2$, and  (c) $d=3$. 
 Quantum emitters, located at waveguide intersections, function as nodes, which are two-level systems with energy gap $\Omega$ that generate, store and process quantum information.
 Loss-less 1D waveguides form quantum channels that transport information, in the form of quantum light, between the nodes. In (b), blue arrows denote coupling to horizontal waveguides and green arrows denote coupling to vertical waveguides. 
 }\label{fig:fig0}
\end{figure}

In this paper, we introduce a strategy for investigating previously unexplored multi-dimensional collective phenomena in waveguide QED. Specifically, we highlight a connection between  multi-dimensional lattices and linear chains, and show that this connection can be used to compute the collective decay rates of multi-dimensional networks. We study collective phenomena such as super- and subradiance \cite{van2013photon,albrecht2019subradiant,asenjo2017exponential,kornovan2019extremely} and bound-states in continuum (BIC) \cite{facchi2016bound,facchi2019bound2} in these multi-dimensional systems\footnote{Here, the continuum refers to the the continuum states that are scattering energy eigenstates. The bound-states appear in this continuum at the distinct energy $E_k=\Omega$. See \cite{dinc2019exact} for a more thorough discussion on BIC in waveguide QED.}. In our investigations, we discover the concept of multi-dimensional superradiance, where, unlike the well-known phenomenon discussed by Dicke \cite{dicke1954coherence}, superradiance becomes partitioned for these multi-dimensional networks. As a consequence, we show that the dimensionality of BIC becomes smaller with increasing lattice dimension, presenting a trade-off between compact and efficient network design versus quantum memory capability. Moreover, we discover that the most subradiant decay rate in a $d$-dimensional hyper-cubic lattices scales as $N^{-3/d}$ for large $N$, generalizing the known $N^{-3}$ scaling in linear chains \cite{1dchain,albrecht2019subradiant,zhang2019theory}\footnote{Here, we note that stricter subradiant conditions have been found in the literature for specifically engineered qubit distances \cite{kornovan2019extremely}. Since we focus our analysis on parameter regimes that are within the vicinity of those that satisfy the super- radiance condition, we discuss multidimensional subradiance
based on $N^{-3}$ scaling. In general, for any $N^{-\alpha}$ scaling, the d-dimensional decay rate scales as$N^{-\alpha/d}$}.

\section{Theory and Results}
\subsection{System model}
We consider a $d$-dimensional lattice of intersecting loss-less waveguides with linear topologies \footnote{As defined in this paper, waveguides with linear topologies are one-dimensional networks; we note, however, that all waveguides considered in this paper are also one-dimensional in the sense that we parameterize the light in each waveguide in terms of the propagation direction and neglect transverse degrees of freedom.}, 
such as the ones illustrated in Fig.~\ref{fig:fig0}. Here, while the waveguides are shown to be intersecting, there are no geometrical restrictions in our model that the waveguides should be intersecting. Only the graph theoretical connections are important, and if the geometry can be deformed by keeping the graph topology intact, the theory does not change. For example, for $d=2$, one could have two separate layers of parallel waveguides, with the qubits being situated in between those two layers. In this way, there is no physical intersections between waveguides. 

In this model, each waveguide along the $n$th direction contains $N_n$ identical qubits, and each qubit couples to a waveguide in the $n$th dimension with the decay rate $\gamma_n$. The Hamiltonian for the entire system is $H = H_0 + H_I$, where $H_0$ includes the self-energies of the light and  qubits, and $H_I$ contains point-like interaction terms \footnote{The point-interactions lead to nearly-constant field amplitudes that change only at the lattice nodes (the point-like interaction assumption reduces what would be a set of differential equations to a set of linear equations).} located at the positions of the nodes \cite{shen2009theory}. The full expression for the Hamiltonian is presented in App. \ref{sec:appendixa}. We note that, in this Hamiltonian, we assume no non-radiative decay to outside of the system, no unwanted scattering at the intersection of the waveguides other than the one mediated by the qubits. We focus on this idealized model to provide a basis for the new physics that emerges in this confined lattice structure. If required, non-radiative decay could be considered by modelling loss modes as additional waveguides, as discussed in the literature \cite{Rephaeli:13}.

Of particular interest in waveguide QED systems are the reflection and transmission parameters, $r$ and $t$ respectively, and  excitation amplitudes $e$ corresponding to a plane wave with momentum $k$. Collectively, these quantities are known as  \emph{scattering parameters}, and can be found from the Hamiltonian $H$ by following the procedure outlined in Ref.~\cite{dinc2019exact}. 

Let us label each qubit using a $d$-dimensional lattice coordinate $\vec\sigma=(\sigma_1,\dots,\sigma_d)$.
If we concentrate on a particular direction $n \leq d$ with corresponding unit vector $\hat{n}$, we can define an adjacent qubit by the coordinate 
$\vec\sigma + a\hat{n} =
(\sigma_1, \dots, \sigma_n+a, \dots, \sigma_d)$, where $a$ is the lattice constant. The scattering parameters then satisfy the equations of motion (EoMs)
\begin{subequations}\label{eq:eom}
\begin{align} 
    t_{\vec\sigma + a\hat{n}}^{(n)} e^{-i k a} -t_{\vec\sigma}^{(n)} + i \sqrt{\gamma_n/2} e_{\vec\sigma} =0,\\
    r_{\vec\sigma + a\hat{n}}^{(n)} e^{i k a} -r_{\vec\sigma}^{(n)} - i \sqrt{\gamma_n/2} e_{\vec\sigma} =0, \\
    \sum_{n=1}^d \sqrt{\gamma_n/2} \left(t_{\vec\sigma}^{(n)}+r_{\vec\sigma}^{(n)}\right) - \Delta_k e_{\vec\sigma} =0,
\end{align}
\end{subequations}
which we derive in App.~\ref{sec:appendixa}. 
Within these equations,  
$\Delta_k = E_k-\Omega$ is the photon detuning energy, 
$E_k$ is the energy of the system, $k$ is the momentum of the photonic degree of freedom and $a$ is the lattice constant. Here we linearize the phase picked up by light propagating between two adjacent qubits such that $ka \simeq \Omega a=\theta$, which is accurate as long as time retardation effects inside the network are negligible. This assumption is valid in the Markovian regime, where the qubits are separated microscopically \cite{zheng2013persistent,dinc2019exact}. We note that, with our definition of scattering parameters, Eq.~(\ref{eq:eom}) is a high-dimensional generalization of the linear chain of qubits discussed in Eq.~(6) in Ref~\cite{1dchain}.

\subsection{Finding collective decay rates}
Now, we turn our attention to the \emph{collective decay rates} $\Gamma$, which contain information about the system's many-body structure. The collective decay rates are complex-valued, with their real components dictating the exponential decay of observables in time and their imaginary components capturing both the oscillatory behavior of the system as well as a characteristic frequency shift in energy levels~\cite{1dchain}. For each collective decay rate, there is a basis state in the single-excitation space that consists of some superposition of single-excited qubit states \cite{dinc2019exact}. For the scope of this paper, we do not focus on finding these states, but only the decay rates.

The collective decay rates can, in principle, be found by solving the EoMs and examining the poles of the scattering parameters, which yields a polynomial characteristic equation for $\Gamma$~\cite{dinc2019exact}. In a network of $N=\prod_{n=1}^d N_n$ qubits, the behaviour of the entire system is thus governed by a total of $(2d+1)N$ linear equations. For the special case of $d=1$, the boundary of the quantum network does not scale with $N$ as the linear chain has only one input and one output port. Hence, the equations of motion can be solved by eliminating all but the two boundary scattering parameters that are defined by the initial conditions. The rest of the scattering parameters can be found via back-propagation using transfer matrices. For this very special case, the scattering problem can be solved analytically and fairly efficiently for up to $N=500$ qubits using the transfer matrix method~\cite{dinc2019exact}. For $d\geq 2$, however, finding the scattering parameters by solving the EoMs becomes intractable for large~$N$. In this case, even if the system's internal degrees of freedom (corresponding to scattering within the system) can be eliminated, as is usually done in $d=1$ through transfer matrix methods \cite{dinc2019exact}, the number of external parameters (and, correspondingly, the number of equations to solve) scales with the size of the quantum network's boundary. 

To overcome some of these limitations, we introduce an idea that we call the \emph{dimensional reduction of poles} (DRoP), whereby we conjecture that there exists an effective mapping  between the collective decay rates (which are obtained from the poles of the scattering parameters \cite{dinc2019exact}) for the multi-dimensional network and waveguides with linear topologies.  The DRoP conjecture makes it possible to find the decay rates of previously intractable higher-dimensional quantum networks: first, one applies the DRoP conjecture to divide the multi-dimensional network into a subset of linear chains, and then one finds the collective decay rates of these linear chains using efficient transfer matrix methods \cite{dinc2019exact}, obtaining the decay rates of the quantum network in the process.
Through this method, one avoids performing calculations for $d$-dimensional, $N$-qubit networks directly, and instead works with a chain of size $\sim O(N^{1/d})$, for which it is possible to eliminate internal degrees of freedom.

\subsection{DRoP conjecture}
To motivate the conjecture, we begin with a simple example using decay rates obtained analytically by solving the EoMs directly. 
Consider a $d=2$ network with $N_n=2$ nodes and single emitter decay rates $\gamma_n$ along each direction $n=1,2$. In this case, there are four collective decay rates, given by
\begin{equation}\label{eq:2d}
\begin{aligned}
    \Gamma_1 &= \gamma_1(1 - e^{i\theta}) + \gamma_2(1 - e^{i\theta})\,,\\ 
    \Gamma_2 &= \gamma_1 (1 - e^{i\theta}) +\gamma_2(1 + e^{i\theta})\,, \\
    \Gamma_3 &= \gamma_1 (1 + e^{i\theta}) + \gamma_2(1 - e^{i\theta})\,,\\
    \Gamma_4 &= \gamma_1 (1 + e^{i\theta}) + \gamma_2 (1 + e^{i\theta})\,.
\end{aligned}
\end{equation} 
One can express these decay rates in terms of those corresponding to a $d=1$ waveguide along the direction $n$ with $N=2$ nodes, each with single emitter decay rate $\gamma_{n}$. For such a one-dimensional set-up, the decay rates are
\begin{align}\label{eq:1d}
    \Gamma_1^{(n)}=\gamma_{n}(1 - e^{i\theta})\,,\quad \Gamma_2^{(n)}=\gamma_{n}(1 + e^{i\theta})\,.
\end{align}
Here, the superscript in $\Gamma$ corresponds to the direction $n$, whereas the subscript distinguishes between distinct decay rates. Comparing Eqs.~\eqref{eq:2d} and~\eqref{eq:1d} reveals that the decay rates for the $d=2$ system can be written in terms of the dimensionless decay rates $z_i^{(n)} \equiv \Gamma_i^{(n)}/\gamma_n$ (with $i=1,2$ denoting the decay rates along $n=1,2$ direction) for the linear system as
\begin{equation*}
\begin{split}
    &\Gamma_1= \gamma_1z_1^{(1)} + \gamma_2 z_1^{(2)}, \quad \Gamma_2= \gamma_1 z_1^{(1)} + \gamma_2z_2^{(2)}, \\
    &\Gamma_3= \gamma_1 z_2^{(1)} + \gamma_2z_1^{(2)}, \quad \Gamma_4 = \gamma_1 z_2^{(1)} + \gamma_2 z_2^{(2)}.
\end{split}
\end{equation*} 
Remarkably, as we show throughout this paper, an analogous construction appears to hold for arbitrary $d$ and set of $N_n$'s. Based on this observation, we propose the following conjecture. 

\begin{dconj}
Consider a hyper-dimensional lattice with $N=\prod_{n=1}^d N_n$ qubits, where $N_n$ is the  number of qubits along direction $n$. Let $z_{i}^{(n)}=\Gamma_{i}^{(n)}/\gamma_n$ denote the dimensionless collective decay rates along the direction $n$, where $\gamma_n$ is the single emitter decay rate corresponding to the same direction, such that $1\leq n \leq d$ and $1 \leq i \leq N_n$. Then the complete set of collective decay rates belonging to the $d$-dimensional quantum network is 
\begin{equation}
        \Gamma = \left\{ \sum_{n=1}^d z_{s_n}^{(n)} \gamma_n
        \; \Big\vert s_n \in \left\{ 1, 2, \ldots, N_n  \right\} \right\}
\end{equation}
with $\vec s=\vec \sigma /a$ denoting the set of indices of the decay rates (with $|\vec s|=d$) and $|\Gamma|=N$ equal to the total number of qubits inside the network. Note that there are $N$ unique sets $\vec s$, each corresponding to a single collective decay rate $\Gamma_{\vec s}$.
\end{dconj}

We demonstrate in App. \ref{sec:appevidence} that decay rates obtained using DRoP match those found directly using the EoMs \footnote{We emphasize that while we focus on $\theta \approx m \pi$, where $m$ is a non-negative integer, in this paper, our demonstrations in App. \ref{sec:appevidence} show that DRoP is a more general phenomenon that holds for general $\theta$ values.}. We also find that DRoP is quite robust to random noise introduced to the single-qubit decay rates (see App.~\ref{sec:appendix_noise}). In the remainder of this paper, we use DRoP to discover new physics by probing regions inaccessible via EoM-motivated methods. We restrict our analysis to the physically relevant dimensions $d=2$ and $d=3$. 

\subsection{Multidimensional superradiance, subradiance, and BIC}
Guided by an intuition developed through studying the linear case~\cite{dinc2019exact}, we notice in our explorations that for $\theta \approx m\pi$ (with $m=0,1,\ldots$), the decay rates tend to cluster around certain regions of the complex plane. To understand the nature of this behavior, we focus on physical phenomena such as superradiance and subradiance (BIC for when $\theta=m\pi$) for waveguide QED systems.

Superradiance (subradiance) occurs when constructive (destructive) interference enhances (suppresses) spontaneous emission. 
Both physical phenomena are known to occur in a linear chain of qubits when $\theta=m\pi$ \cite{1dchain,zhou2017single,dinc2019exact}. When subradiance occurs,  $N-1$ decay rates converge to the origin, whereas superradiance is when one of the decay rates converges to $N\gamma$, where $\gamma$ is the single qubit decay rate. Out of $N$ possible first excited states, $N-1$ are dark states, i.e. states that have zero decay rate, owing to subradiance. Thus, one can construct the subspace containing the first excited states in terms of $N-1$ dark states, which do not couple to electromagnetic radiation, and one superradiant state that does. The collective system behaves as a two-level system between the one superradiant state and the ground state, which explains the Lorentzian shape of the transmission and reflection amplitudes, as discussed in \cite{zhou2017single}.

We find that in a $d$-dimensional quantum network, the dimension of the superradiant subspace is larger than one. As a result, the collective system no longer behaves as a qubit and is hence no longer described by Lorentzian transmission and reflection amplitudes. Consequently, the superradiant and subradiant states emerge differently than for the linear case. In a $d$-dimensional system, the DRoP conjecture predicts $\prod_n (N_n-1)$ subradiant states, with the rest showing superradiant features. We additionally find that, unlike in $d=1$, for $d>1$ superradiance is also partitioned. We define $n$-dimensional superradiance as the case where the decay rates of qubits along $n$ different directions are summed constructively. Then, there exist states with 1-, 2-, $\ldots$, and $d$-dimensional superradiance, a previously unobserved phenomena which we call \emph{multi-dimensional superradiance}. This partition of the superradiant behavior is illustrated for a small network in Fig.~\ref{fig:figure2} with $\theta=0.9999\pi$. We pick $\theta\neq \pi$ but close to $\pi$ to show the dimensionality of each cluster. We use this small network to validate our DRoP results with EoM-motivated methods. While partitioned nature of superradiance may be observable without the application of DRoP, the multi-dimensionality aspect cannot \footnote{Without DRoP, the partitioning of decay rates would be ambiguous, as in a single group, there are decay rates with different values. The partitioning is not made according to the decay rate values, but w.r.t. the physical meaning, e.g. which dimensions are being summed over. Without DRoP, there is no concept of summing over dimensions, hence no clear physical boundary for the partitions.}. The grouping into 1D, 2D and 3D superradiance is only possible through the DRoP conjecture and is consistent among large structures that cannot be accessed through the numerical algorithm discussed in App. \ref{sec:appendixc}. With DRoP, we find the origin of multi-dimensional superradiance, the dimension of each superradiant subspace as well as the strength of the corresponding $d$-dimensional superradiance. 

\begin{figure}
    \centering
    \includegraphics[width=\columnwidth]{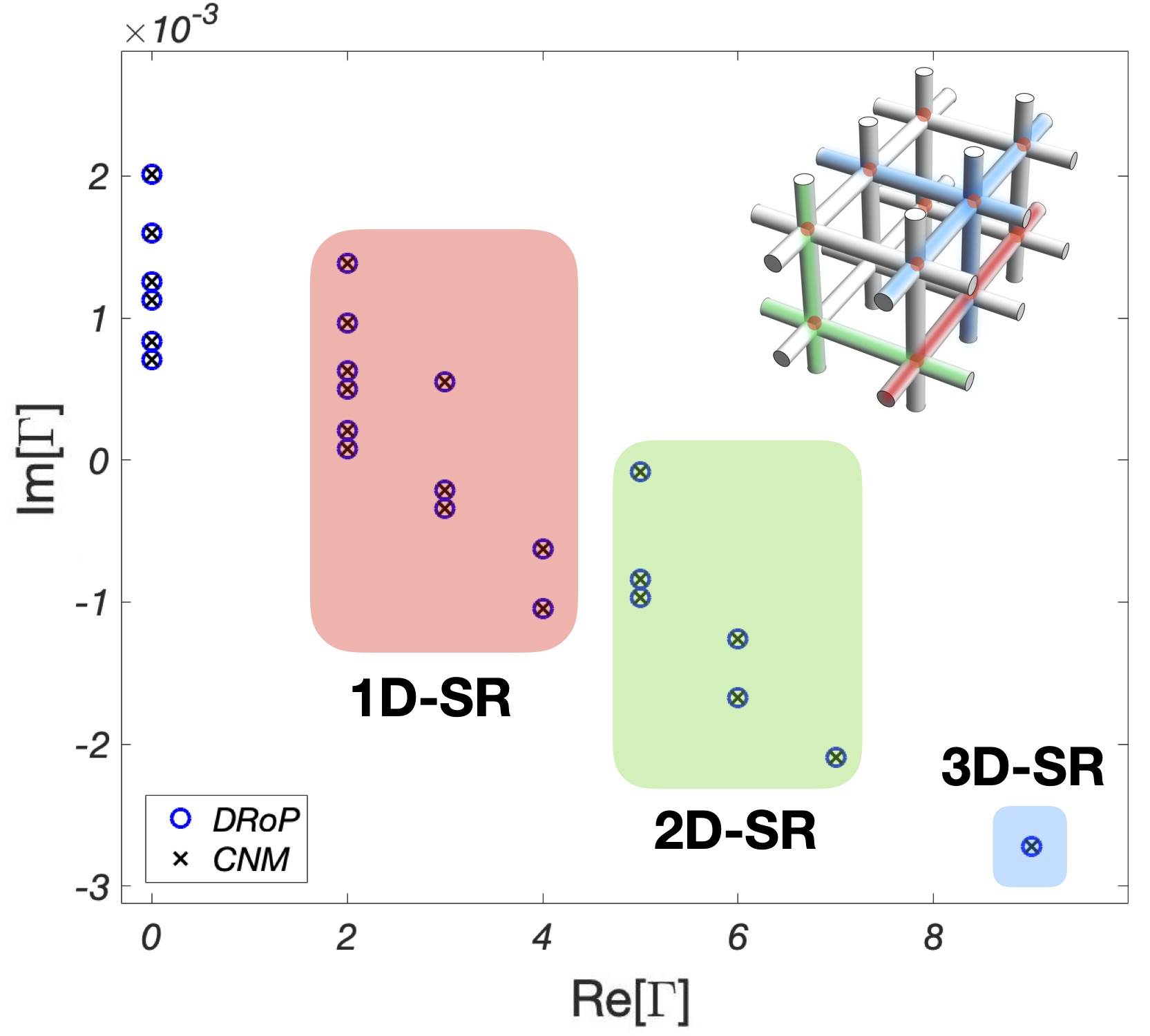}
    \caption{Collective decay rates (in units of $\gamma_1$) for a $d=3$ system. Notice the difference in scale between the $x$- and $y$-axes. We see that the decay rates cluster around the superradiant and subradiant values with vanishing imaginary part. Here, we used the parameters: $N_1\times N_2 \times N_3 = 2\times3\times4$, $\gamma_1=\gamma_2=\gamma_3$ and $\theta=0.9999\pi$.
    As in Figs.~\ref{fig:figure1a} and~\ref{fig:figure1b}, the black crosses correspond to results obtained from the numerical condition number method (CNM) discussed in App.~\ref{sec:appendixc}.
    }   \label{fig:figure2}
\end{figure}

Additionally, DRoP allows us to determine how a system's most-subradiant decay rate scales with its dimension. As pointed out in Refs.~\cite{1dchain,albrecht2019subradiant,zhang2019theory}, for a linear chain of qubits, the most subradiant decay rate scales as $N^{-3}$. Here, we find that for a $d$-dimensional quantum network, the corresponding scaling is $N_{\rm min}^{-3}$ with $N_{\rm min} \equiv \min_n [N_n]$. 
To find this scaling, we start by summing over the subradiant decay rates $\Gamma_{\rm sub}^{(n)}$ along each direction $n$ such that, overall, $\Gamma_{\rm sub}=\sum_n \Gamma_{\rm sub}^{(n)}$. From the literature, we know that for a linear chain of qubits, the scaling is $N^{-3}$ and thus $\Gamma_{\rm sub}^{(n)} \sim O(N_n
^{-3})$. The term with smallest $N_n$ dominates the summation, hence giving the $N_{\rm min}^{-3}$ expression.
For a hyper-cubic lattice, $N_{\rm min}=N^{1/d}$, leading to a subradiant decay rate that scales as $N^{-3/d}$. As expected, with increased dimensionality, subradiant behavior is washed out more gradually and the network couples to the continuum more strongly. Since subradiance has been considered crucial for quantum memory applications \cite{calajo2019exciting,dinc2019exact}, stronger design restrictions are expected to apply for higher dimensional quantum networks in comparison to linear chains.

For $\theta=m\pi$ and in the absence of non-radiative decay (which is the case considered in this paper), the subradiant states become BIC (or dark states). Consequently, the dimensionality of BICs is equal to the dimensionality of subradiant states, i.e. $\prod_n (N_n-1)$. The condition of BIC for a linear chain has been considered in Refs.~\cite{dinc2019exact,facchi2019bound2}.
As shown in App.~\ref{sec:appendixe}, the overall condition for BIC can be generalized to higher dimensions as
\begin{equation} \label{eq:bic}
    \sum_{c} p_c \, e_{\vec\sigma}^{(n)}=0 \text{  for all linear chains $c$} .
\end{equation}
Here, $p_c=1$ for even number of qubits along the chain and $\pm 1$ (alternating along the chain) for odd number of qubits along the chain. Since $\prod_n (N_n-1) \leq N-1$, the dimensionality of the BIC subspace in a $d>1$ dimensional quantum network is smaller than the one belonging to $N$ qubits in a linear chain. Therefore, it may be beneficial to use a lower-dimensional quantum network for memory applications. Here, again, design restrictions are expected to dictate the dimensionality of the quantum circuit being designed.

Finally, we note that our symmetry assumption is fundamental to the DRoP conjecture. There are three straightforward ways to violate the symmetry assumption: i) by adding noise to the decay rates $\gamma_n$ for each qubit, ii) by allowing qubits to have non-identical energy level separation $\Omega$, and iii) by varying the distance between qubits. In App. \ref{sec:appendix_noise}, we consider case i) and show that, as long as the noise added to the system is relatively small, DRoP can still be used to approximate the collective decay rates of the system, and such approximate values can later be used as seeds for the EoM-motivated search algorithms to find the exact decay rates.

\section{Conclusion}
Within this work, we have introduced the DRoP conjecture and illustrated its accuracy for a variety of examples  via analytical and numerical methods (App. \ref{sec:appevidence}). We have used DRoP to probe superradiance, subradiance and BIC in multi-dimensional quantum networks. We emphasize that, while EoM-motivated methods such as those discussed in App.~\ref{sec:appendixc} can provide some numerical information on the collective decay rates of small networks, our main results are derived from the analyticity that accompanies the DRoP conjecture and apply to multi-dimensional networks of arbitrarily large sizes. Previous research in waveguide QED mainly focused on linear structures and scattering parameters. While finding the scattering parameters efficiently in a large quantum network is still an open and important question, the DRoP conjecture opens the door for investigating multi-dimensional networks via their collective decay rates. 

We have focused our studies on cases where time retardation effects resulting from the inter-system photon propagation are neglected. In future work, it will be interesting to consider whether the DRoP conjecture holds in regimes where non-Markovian and time-delayed quantum coherent feedback effects become dominant~\cite{guimond2017delayed,calajo2019exciting,pichler2016photonic} and, in particular, to check whether recently-discovered super-superradiance effects for linear chains \cite{dinc2019non,sinha2020non} persist for higher-dimensional networks. We believe that the proof of the DRoP conjecture lies in deriving the matrix equation for the spontaneous emission dynamics (discussed in Ref.~\cite{dinc2019exact} for a linear chain), which would allow for studies of time evolution in multi-dimensional quantum networks. As collective decay rates are linked to  time evolution in waveguide QED~\cite{dinc2019exact}, it is possible that the dimensional reduction produced by DRoP could be present for the time evolution as well. Due to their potential to aid in the investigation of multi-dimensional networks, we expect DRoP to be useful for designing complex quantum networks for future quantum technologies. 

\section*{Acknowledgements}
FD acknowledges discussions with Diego Garcia and Jairo Rojas on Mathematica coding language. We also thank Gabriela Secara for help with preparing Figure \ref{fig:figure2}. 
Research at Perimeter Institute is supported in part by the Government of Canada through the Department of Innovation, Science and Economic Development Canada and by the Province of Ontario through the Ministry of Colleges and Universities.
We acknowledge the support of the Natural Sciences and Engineering Research Council of Canada (funding reference number RGPIN-2016-04135). 


\newpage
\appendix
\onecolumngrid
\section{Hamiltonian and energy eigenstates} \label{sec:appendixa}
In this section, we describe the Hamiltonian and the corresponding stationary states for a $d$-dimensional quantum network. The free and interaction Hamiltonians are given by
\begin{subequations}
\begin{align}
H_0 &= i \sum_{n=1}^d \sum_{m=1}^{\prod_{j\neq n} N_j} \int_{-\infty}^\infty \diff x \left( \psi_{n,m,L}^\dag(x) \frac{\partial}{\partial x} \psi_{n,m,L}(x) - \psi_{n,m,R}^\dag(x) \frac{\partial}{\partial x} \psi_{n,m,R}(x)  \right) +\Omega \sum_{\forall \vec \sigma} \ket{e_{\vec \sigma}} \bra{e_{\vec \sigma}},  \\
        H_I&= \sum_{n=1}^d \sum_{\forall \vec \sigma} \sqrt{\gamma_n/2} \Big[a^\dagger_{\vec \sigma} [\psi_{n,m,R}(\sigma_n) + \psi_{n,m,L}(\sigma_n) ] + h.c. \Big]. \label{eq:intham}
\end{align}
\end{subequations}
Here, $a^\dag_{\vec \sigma}$ is the excitation operator for the qubit whose position is given by the set $\vec{\sigma}=\{\sigma_1,...,\sigma_d\}$. $\gamma_n$ is the single qubit decay rate along the $n$th direction, $\psi^\dag_{n,m,L/R}(x)$ is the bosonic creation operator for the left/right moving photons at position $x$ in the $m$th waveguide along the $n$th direction, and $m=m(n,\vec \sigma)$. $\Omega$ is the energy separation of the qubit, $N_n$ is the number of atoms along direction $n$ and $\ket{e_{\vec \sigma}}=a_{\vec \sigma}^\dag \ket 0$ is the excited state for the $\vec \sigma$th qubit with $\ket 0$ being the superposition of the vacuum state and the ground state of all qubits. $\gamma_n$ is a constant and not a function of frequency, which is inline with the assumption that we are interested in energies $E_k \sim \Omega \pm O(\gamma_n)$, where $\gamma_n\ll \Omega$ \cite{shen2009theory}. Throughout the paper, we use natural units such that $\hbar=v_g=1$, where $v_g$ is the group velocity of photons inside the waveguides. 

Intuitively, if we were to consider time-evolution of this Hamiltonian, the first derivative in the free field Hamiltonian turns out to be a simple translation operator when we consider the interaction-free time-evolution. In that sense, the propagation is included in the spatial dependence of the free Hamiltonian. On the other hand, the interaction Hamiltonian derives the qubit excitations. In this paper, we are interested in stationary states, e.g. states that are energy eigenstates of the Hamiltonian. To find those states, we can construct a Bethe Ansatz as
\begin{equation}
        \ket{E_k} = \sum_{n=1}^d \sum_{\forall \vec \sigma} \int_{-\infty}^\infty \diff x \left(\phi_{n,\vec \sigma,R}(x) \psi_{n,m,R}^\dag(x) + \phi_{n,\vec \sigma,L}(x) \psi_{n,m,L}^\dag(x) \right) \ket 0 + \sum_{\forall \vec \sigma} e_{\vec \sigma}\ket{e_{\vec \sigma}} + \text{B.C}.. \label{eq:energyeigenstate}
\end{equation}
Here, $\phi_{n,\vec \sigma,R}(x)=t_{\vec \sigma}^{(n)} e^{ik(x-\sigma_n)} [\Theta(x-\sigma_n+a)-\Theta(x-\sigma_n)]$ and $\phi_{n,\vec \sigma,L}(x)=r_{\vec \sigma}^{(n)} e^{-ik(x-\sigma_n)} [\Theta(x-\sigma_n+a)-\Theta(x-\sigma_n)]$ are the piece-wise field amplitudes and $a$ is the lattice constant. At the boundary of the network, the field amplitudes include only one Heaviside function rather than two, meaning that the photon can radiate out of the system. This expression is a generalization of Eqs.~(3-4) in Ref.~\cite{1dchain}. $t_{\vec \sigma}^{(n)}$ and $r_{\vec \sigma}^{(n)}$ are, respectively, the transmission and reflection coefficients along the $n$th dimension belonging to the qubit $\vec \sigma$, whereas $e_{\vec \sigma}$ is the corresponding excitation coefficient. B.C. refers to the boundary terms of the photonic field, which can be hand-picked depending on the type of solution sought, due to the degeneracy of scattering eigenstates. Here we omit discussion the set of boundary conditions, as they do not have any effect on the collective decay rates. Applying the condition $H \ket{E_k}=E_k \ket{E_k}$, we show below that we obtain the equations of motion (EoMs) given in Eq.~\eqref{eq:eom} following the usual position space approach~\cite{shen2005coherent,1dchain}.

Applying the free Hamiltonian to the energy eigenstate, we find that
\begin{equation}
\begin{split}
    H_0 \ket{E_k}&= E_k \ket{E_k} - \Delta_k \sum_{\forall \vec \sigma} e_{\vec \sigma} \ket{e_{\vec \sigma}} \\
    &-i \sum_{n=1}^d \sum_{\forall \vec \sigma} \int_{-\infty}^\infty \diff x \,  t_{\vec \sigma}^{(n)} e^{ik(x-\sigma_n)} [\delta(x-\sigma_n+a) - \delta(x-\sigma_n)]  \psi_{n,m,R}^\dag(x) \ket 0  \\
    &+i \sum_{n=1}^d \sum_{\forall \vec \sigma} \int_{-\infty}^\infty \diff x \, r_{\vec \sigma}^{(n)} e^{-ik(x-\sigma_n)} [\delta(x-\sigma_n+a)-\delta(x-\sigma_n)] \psi_{n,m,L}^\dag(x) \ket 0  + \text{B.C.},
\end{split}
\end{equation}
where $\Delta_k = E_k-\Omega$ and $E_k = |k|$. For now, we do not put much emphasis on the boundary terms, although their shape will emerge at the end of our calculations. Applying the interaction Hamiltonian gives
\begin{equation}
    \begin{split}
       H_I \ket{E_k}&= \sum_{n=1}^d \sum_{\forall \vec \sigma} \sqrt{\gamma_n/2} \sum_{\sigma_n'} \left(\phi_{n,\vec \sigma,R}(\sigma_n') + \phi_{n,\vec \sigma,L}(\sigma_n') \right) \ket{e_{\vec \sigma}} \\
       &+ \sum_{n=1}^d \sum_{\forall \vec \sigma} \sqrt{\gamma_n/2} e_{\vec \sigma} [\psi_{n,m,R}^\dag(\sigma_n) + \psi_{n,m,L}^\dag(\sigma_n) ] \ket 0 + \text{B.C.}.
    \end{split}
\end{equation}
Now, shifting the indices, re-arranging some terms and using field continuity at the atomic positions, we find that
\begin{equation}\label{eq:appendixboundary}
    \begin{split} 
     H \ket{E_k}&= E_k \ket{E_k} +  \sum_{\forall \vec \sigma} \left( \sum_{n=1}^d  \sqrt{\gamma_n/2} \left[t_{\vec \sigma}^{(n)}+r_{\vec \sigma}^{(n)} \right] -\Delta_k e_{\vec \sigma} \right) \ket{e_{\vec \sigma}} \\
     &- i \sum_{n=1}^d \sum_{\forall \vec \sigma} \left[ t^{(n)}_{\vec\sigma + a\hat{n}}e^{-ika}-t^{(n)}_{\vec \sigma}+ i \sqrt{\gamma_n/2} e_{\vec \sigma} \right]  \psi_{n,m,R}^\dag(\sigma_n) \ket 0 \\
     &+i \sum_{n=1}^d \sum_{\forall \vec \sigma} \left[ r^{(n)}_{\vec\sigma + a\hat{n}}e^{ika}-r^{(n)}_{\vec \sigma}- i \sqrt{\gamma_n/2} e_{\vec \sigma} \right]  \psi_{n,m,R}^\dag(\sigma_n) \ket 0.
    \end{split}
\end{equation}
Here, $\vec\sigma + a\hat{n}$ is defined as $(\sigma_1,\dots,\sigma_n+a,\dots,\sigma_d)$ and we note that such terms originate from the index shifting in summations. Within this equation, the shape of the boundary terms arise analogous to the 1D case~\cite{1dchain,dinc2019exact}, and the boundary terms include incoming and out-radiating photonic components. For $\ket{E_k}$ to be an energy eigenstate, all the other terms in Eq.~\eqref{eq:appendixboundary} should be zero, which leads to the EoMs given in Eq.~\eqref{eq:eom} of the main text.

There is another more straightforward and elegant proof for deriving \emph{local} EoMs in waveguide QED systems with delta-function point interactions, for which we describe the strategy here. 
For a given system with many waveguides and qubits, one can divide the Hamiltonian into smaller pieces, with each piece containing a qubit and portion of all the waveguides that interact with it. These portions can be picked such that each is halved between the two adjacent qubits that are coupled to the same waveguide. Then, the Hamiltonian divides into sub-pieces such that 
\begin{equation}
    H = \sum_{Q} H_{Q},
\end{equation}
where $Q$ sums over all the qubits inside the system. Without loss of generality, $H_Q$ can be defined as
\begin{equation}
\begin{split}
      H_Q&=i \sum_W \int \diff x_W \left( \psi_{L,W}^\dag(x_W) \frac{\partial}{\partial x_W} \psi_{L,W}-\psi_{R,W}^\dag(x_W) \frac{\partial}{\partial x_W} \psi_{R,W} \right) \\
      &+ \Omega_Q \ket{e_Q}\bra{e_Q} + \sum_W \sqrt{\gamma_W/2} \Big[a^+_Q [\psi_{R,W}(Q_W) + \psi_{L,W}(Q_W) ] + h.c. \Big].  
\end{split}
\end{equation}
Here, $W$ stands for waveguides that interact with the qubit $Q$ and the upper and lower bounds of the integrals are not relevant, as they depend on the sub-division of waveguides into $H_Q$. As long as different sub-pieces are patched such that the photonic components are continuous at the patch points, no further equations of motion arise from the boundaries. Now, one can use the Bethe Ansatz approach to find the equation of motion around this single qubit as done in \cite{shen2005coherent,1dchain}, there are $2N_w^{(Q)}+1$ many equations resembling Eq. (\ref{eq:eom}) in the main text, with $N_w^{(Q)}$ being the number of waveguides interacting with the qubit $Q$. Bringing all together after patching, there are $\sum_Q (2 N_w^{(Q)}+1)$ EoMs for the whole system.

What makes this alternative proof more elegant is the fact that it does not use the specific geometry of the problem at hand. In fact, the local equations of motion are all the same for any type of waveguide QED system. The geometric properties of the system become important at the patching stage, where the inputs and outputs of patches should be properly defined to be continuous at the patch points (hence the irrelevance of space integral bounds). In a way, it is not the fundamental physics behind the EoMs that result in different emerging properties, such as the ones we discover in this paper, but rather the different phase relations that arise from patching in different geometries that lead to the different emergent phenomena.

\section{Evidence for DRoP conjecture}\label{sec:appevidence}

In the first comparison, we considered systems where $N_1,N_2,N_3 \leq 2$ for $d=3$, or $N_1,N_2\leq 3$ for $d=2$. In these cases, analytical results for collective decay rates can be found by directly solving Eq.~\eqref{eq:eom}. We compare these results to those obtained using DRoP and find that they agree exactly. We now illustrate the case for $N_1=N_2=3$ explicitly.

For a linear chain of three qubits, the dimensionless collective decay rates are~\cite{dinc2019exact}
\begin{subequations} 
\begin{align}
      z_1=\frac{\Gamma_1^{(1)}}{\gamma} &=\frac{1}{2}  \left(2+
   e^{2 i \theta} + e^{ i \theta}\sqrt{8  + e^{2 i \theta}}\right), \\
    z_2=\frac{\Gamma_2^{(1)}}{\gamma} &=\frac{1}{2} \left( 2+e^{2
   i \theta} -e^{ i \theta}\sqrt{8+ e^{2 i \theta}} \right), \\
    z_3=\frac{\Gamma_3^{(1)}}{\gamma} &=  \left(1-e^{2 i \theta}\right), 
\end{align}
\end{subequations}
with $\gamma$ corresponding to the single qubit decay rate. 
An analytical study of a $3 \times 3$ system with individual decay rates $\gamma_{1/2}$ shows that the collective decay rates of this higher-dimensional system are given by
\begin{subequations} 
\begin{align}
    \Gamma_1^{(2)} &=\frac{1}{2}  \left(2+
   e^{2 i \theta} + e^{ i \theta}\sqrt{8  + e^{2 i \theta}}\right)\gamma_1+\frac{1}{2}  \left(2+
   e^{2 i \theta} + e^{ i \theta}\sqrt{8  + e^{2 i \theta}}\right)\gamma_2, \\
    \Gamma_2^{(2)} &=\frac{1}{2}  \left(2+
   e^{2 i \theta} - e^{ i \theta}\sqrt{8  + e^{2 i \theta}}\right)\gamma_1+\frac{1}{2}  \left(2+
   e^{2 i \theta} + e^{ i \theta}\sqrt{8  + e^{2 i \theta}}\right)\gamma_2 \\
    \Gamma_3^{(2)} &=(1-e^{2i\theta}) \gamma_1 + \frac{1}{2} \left( 2+e^{2i\theta}+ e^{i\theta} \sqrt{8+e^{2i\theta}} \right) \gamma_2, \\
    \Gamma_4^{(2)} &=\frac{1}{2}  \left(2+
   e^{2 i \theta} + e^{ i \theta}\sqrt{8  + e^{2 i \theta}}\right)\gamma_1+\frac{1}{2}  \left(2+
   e^{2 i \theta} - e^{ i \theta}\sqrt{8  + e^{2 i \theta}}\right)\gamma_2, \\
    \Gamma_5^{(2)} &=\frac{1}{2}  \left(2+
   e^{2 i \theta} - e^{ i \theta}\sqrt{8  + e^{2 i \theta}}\right)\gamma_1+\frac{1}{2}  \left(2+
   e^{2 i \theta} - e^{ i \theta}\sqrt{8  + e^{2 i \theta}}\right)\gamma_2, \\
    \Gamma_6^{(2)} &=(1-e^{2i\theta}) \gamma_1+ \frac{1}{2} \left( 2+e^{2i\theta}- e^{i\theta} \sqrt{8+e^{2i\theta}} \right) \gamma_2,  \\
    \Gamma_7^{(2)} &=\frac{1}{2}  \left(2+
   e^{2 i \theta} + e^{ i \theta}\sqrt{8  + e^{2 i \theta}}\right)\gamma_1+(1-e^{2i\theta})\gamma_2,\\
    \Gamma_8^{(2)} &=\frac{1}{2} \left( 2+e^{2i\theta}- e^{i\theta} \sqrt{8+e^{2i\theta}} \right)\gamma_1 +(1-e^{2i\theta})\gamma_2,\\
    \Gamma_9^{(2)} &= (1-e^{2i\theta})\gamma_1+(1-e^{2i\theta})\gamma_2.
\end{align}
\end{subequations}
Re-writing these decay rates in terms of $z^{(1)}$, we obtain the effective mapping predicted by the DRoP conjecture such that
\begin{subequations} 
\begin{align}
    \Gamma_1^{(2)} &=z_1 \gamma_1+z_1\gamma_2, \\
    \Gamma_2^{(2)} &=z_2 \gamma_1+z_1\gamma_2, \\
    \Gamma_3^{(2)} &=z_3\gamma_1+z_1\gamma_2, \\
    \Gamma_4^{(2)} &=z_1\gamma_1+z_2\gamma_2, \\
    \Gamma_5^{(2)} &=z_2\gamma_1+z_2\gamma_2, \\
    \Gamma_6^{(2)} &=z_3\gamma_1+z_2\gamma_2,  \\
    \Gamma_7^{(2)} &=z_1\gamma_1+z_3\gamma_2,\\
    \Gamma_8^{(2)} &=z_2\gamma_1+z_3\gamma_2,\\
    \Gamma_9^{(2)} &= z_3\gamma_1+z_3\gamma_2.
\end{align}
\end{subequations}
Comparing the efficiency of both algorithms for finding the collective decay rates in this case, DRoP is $\sim 10^3$ times faster on a personal computer.
Similar analytical correspondence can be shown for a $3\times2\times2$, $4\times 3$ systems, both of which seem to be the boundary cases where Mathematica gives analytical results within less than a few hours on a standard personal computer. 

Next, for $N \sim 10$, finding the decay rates analytically by solving the EoMs is intractable. DRoP, on the other hand, yields analytical results. As an example, we consider a $4\times 4$ two-dimensional waveguide lattice. In Fig.~\ref{fig:figure1a}, we show analytical results using DRoP compared with results obtained numerically from the EoMs (App.~\ref{sec:appendixc} contains details on the numerical approach used). We  check all possible cases with $N_n\leq 3$ in $d=3$ and $N_n\leq4$ in $d=2$, and find that the direct numerical results agree with our analytical DRoP results within machine precision.

\begin{figure}[b]
    \centering
\includegraphics[width=16cm]{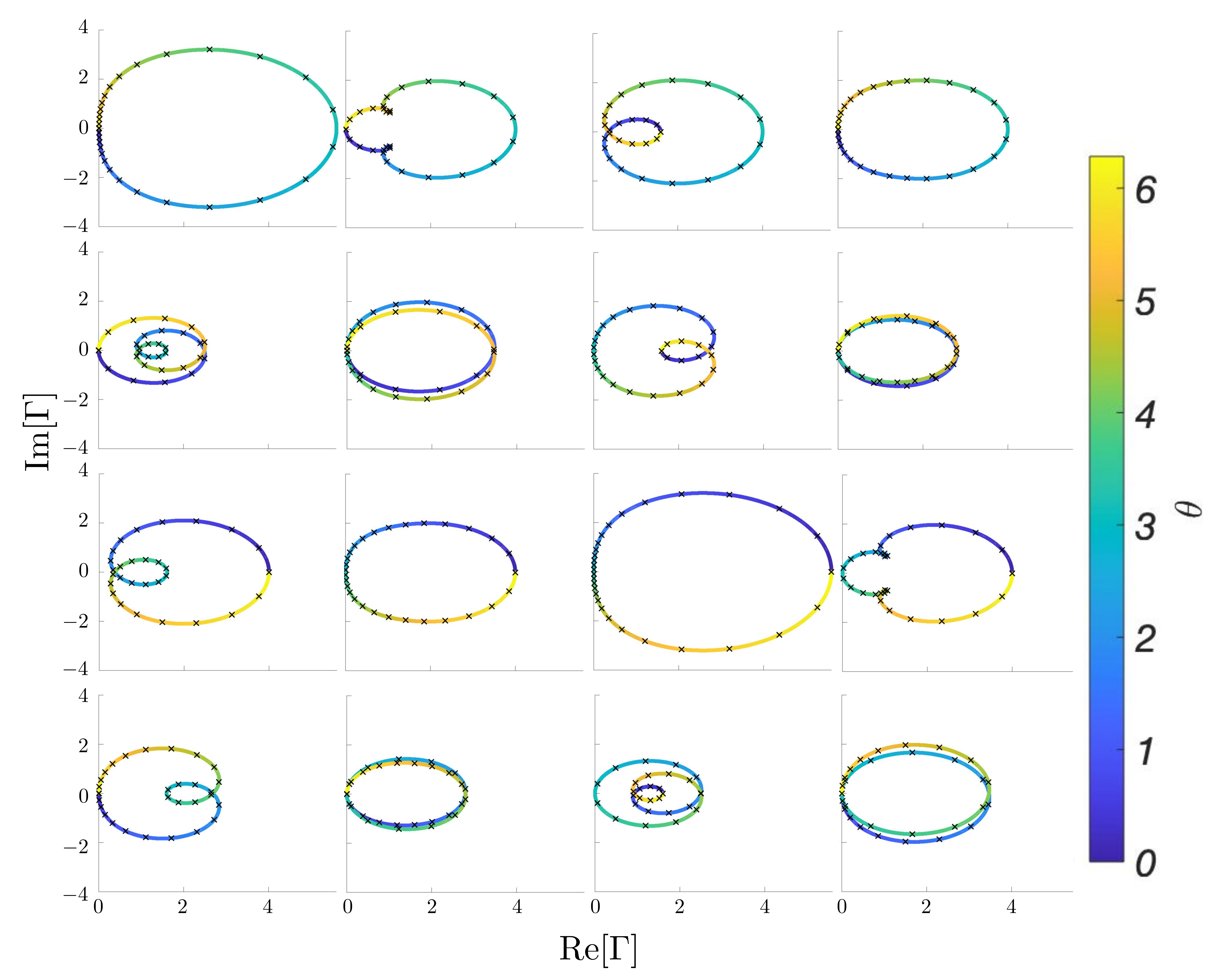}
    \caption{Collective decay rates  (in units of $\gamma_1$) for a $d=2$, $4\times4$ waveguide lattice ($d=2$ and $N_1=N_2=4$, with $\gamma_2/\gamma_1=0.4$). Each panel shows how particular decay rates behave in the complex plane for $\theta \in [0, \pi]$. The coloured solid lines correspond to results obtained analytically using DRoP while the black crosses correspond to direct results obtained from the EoMs using the numerical method discussed in App.~\ref{sec:appendixc}).}
    \label{fig:figure1a}
\end{figure}

\begin{figure}[t]
    \centering
\includegraphics[width=9cm]{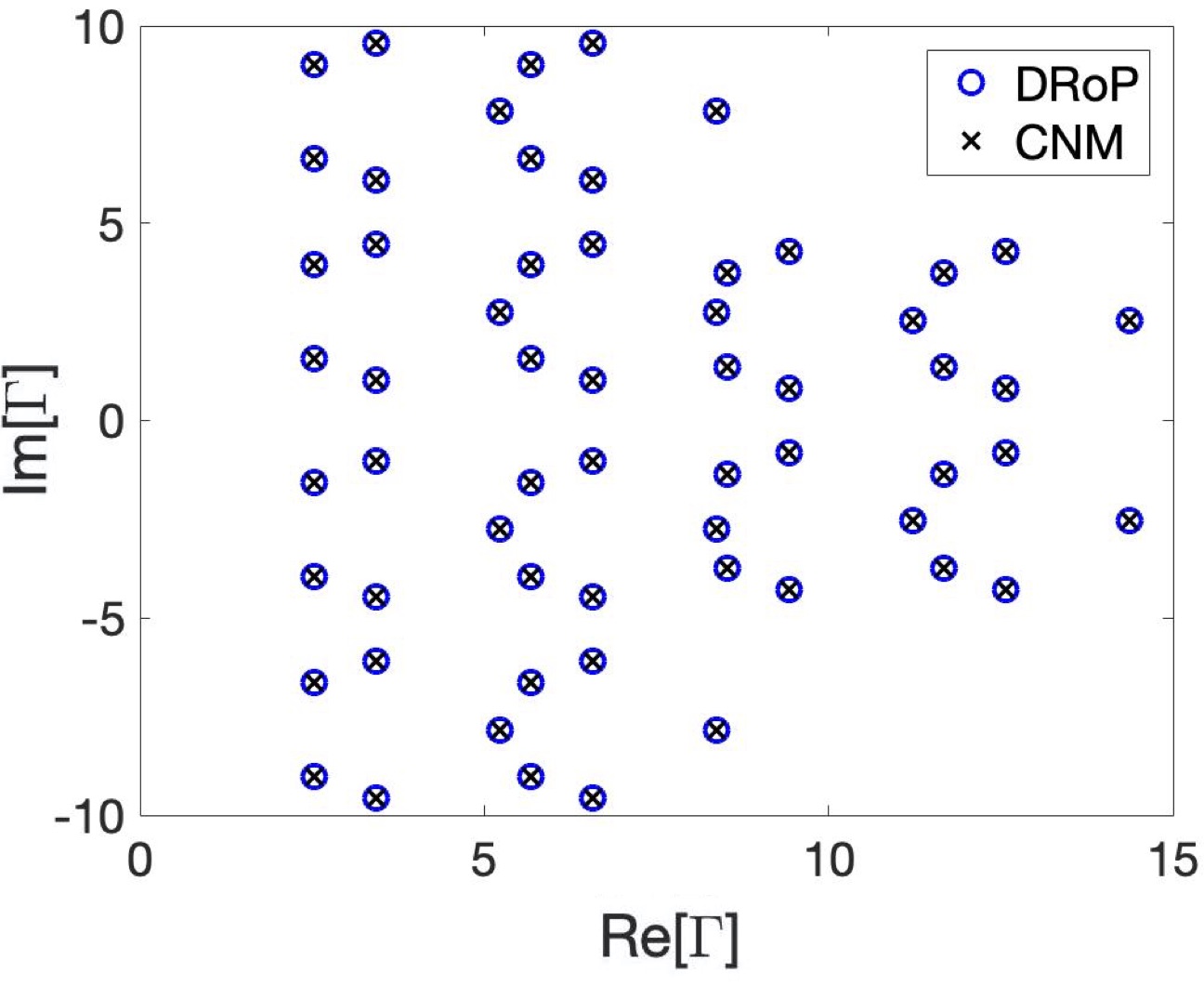}
    \caption{Collective decay rates $\Gamma_i$ (in units of $\gamma_1$) for a $d=3$, $5\times 3\times 4$ waveguide lattice ($d=3$, $N_1=5$, $N_2=3$, and $N_3=4$ for $\theta=\pi/2$ with $\gamma_2/\gamma_1=4$, $\gamma_3/\gamma_1=2$). The system has $60$ collective decay rates, all shown here.  The circles  correspond to results obtained numerically using DRoP while the black crosses correspond to direct results obtained numerically from the equations of motion using a numerical method known as the condition number method (CNM), as discussed in App.~\ref{sec:appendixc}.}
    \label{fig:figure1b}
\end{figure}

Analytical results using DRoP are possible even for large $N$, but analysis becomes cumbersome due to the fact that the analytical expressions for the decay rates end up having many branch cuts in the complex plane. We therefore turn to using DRoP numerically, and make comparisons for particular values of $\theta$. We check for various cases 
and again find that the decay rates found by numerically solving the EoMs directly and those found with DRoP (in combination with transfer matrix methods, see App.~\ref{sec:appendixc}) agree within machine precision. An example case for a three dimensional $5\times 3\times 4$ waveguide lattice is shown in Fig.~\ref{fig:figure1b}.

In all cases that we have checked, we find that the collective decay rates from DRoP match those found via EoM-motivated methods either exactly analytically or, when analytical comparison is not possible, to within machine precision.

We are using the EoM-motivated method to validate DRoP rather than using it to discover new physics, because the EoM-motivated method cannot access the regimes that we are discussing with DRoP. An immediate example of such an inaccessible regime is the multi-dimensional superradiance concept discussed in the main text. We know that for a linear chain, when $\theta=m\pi$, there are $N-1$ zero and one $N\gamma$ decay rates \cite{dinc2019exact}. From this, we arrive at the definition of multi-dimensional superradiance for any $N$ in the main text. Such a computation is not possible for the EoM-motivated method, as there will always be some $N'$ after which this method will be too computationally expensive.

\section{DRoP's robustness to noise} \label{sec:appendix_noise}

Here we probe DRoP's robustness to random errors that might occur during the fabrication of a quantum network. To account for situations where the individual decay rates $\gamma_n$ may not be exactly the same for all qubits along direction $n$, we append an additional numerical optimization step to DROP. To begin with, let the individual decay rates in Eq.~\eqref{eq:eom} be replaced with
\begin{equation}
    \gamma_n \to \gamma^{(n)}_{\vec \sigma}=\gamma_n (1 + \mathcal{N}(0,\epsilon_{\rm max}^2)).
\end{equation}
Here, $\mathcal{N}(\mu,\epsilon_{\rm max}^2)$ is the normal random distribution with mean $\mu$ and standard deviation $\epsilon_{\rm max}$ such that random noise is inserted into the decay rates at all atomic positions. Fig.~\ref{fig:noise} compares results predicted by DRoP, which approximates the noisy case as $\epsilon_{\rm max}=0$, to noisy parameters obtained via numerical methods. We can see that decay rates predicted by DRoP are good estimates  for the exact decay rates of the system based on the averages along one dimension $\gamma_n=\left< \gamma_{\vec \sigma}^{(n)} \right>_{\sigma}$. The estimates provided by DROP can then seed a minimum search algorithm (e.g. MATLAB's \texttt{fminsearch}) to reach the exact solutions more efficiently.

\begin{figure}
    \centering
    \includegraphics[width=8cm]{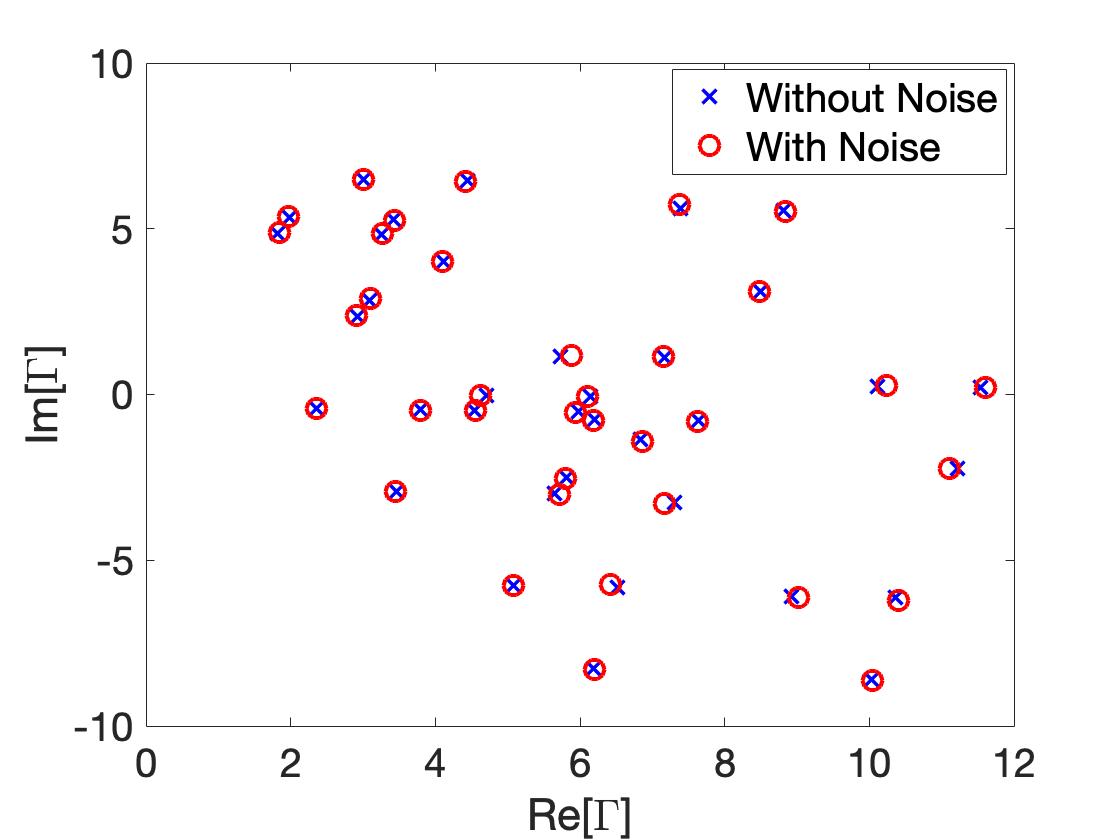}
    \caption{Comparison of results predicted by the DRoP conjecture and noisy parameters obtained via numerical methods (all in units of $\gamma_1$) with $\epsilon_{\rm max}=0.05$. 
    These results correspond to a $3\times 2 \times 6$ quantum network with $\gamma_2/\gamma_1=3$, $\gamma_3/\gamma_1=2$ and $\theta=0.65\pi$. }
    \label{fig:noise}
\end{figure}

\section{Methods used in the main text for finding the collective decay rates}\label{sec:appendixc}

Finding the scattering parameters by solving Eq.~(\ref{eq:eom}) for $d=1$ and a general $N$ is straightforward via the transfer matrix method \cite{1dchain}. Once the scattering parameters are known, the collective decay rates can be read off from the poles $\Delta^{(0)}=\{\Delta_k^{(0)}\}$  of the scattering parameters. However, for $d \geq 2$, solving for the scattering parameters become computationally intractable for even $N\sim 10$. As an example, for $\{d,N_1,N_2,N_3\}=\{3,2,3,4\}$, one needs to solve a set of $168$ coupled equations, which is not solvable on a standard personal computer within a time-span of few hours. 
As a result, in contrast to the 1-D case, solving for the scattering parameters is not a viable method to investigate the collective decay rates of a large multi-dimensional quantum network. For a $2$-D quantum network, solving Eq.~(\ref{eq:eom}) becomes analytically intractable for $N \sim O(10)$ and numerically intractable for $N \sim O(100)$ on a standard personal computer. 

Fortunately, if one is only interested in collective decay rates, instead of solving the linear system of equations multiple times, one can simply consider the matrix $A$, which contains the left hand side of Eq.~(\ref{eq:eom}), and find the set of poles $\Delta^{(0)}$, for which $A$ is singular. The decay rates can be found by rotation the poles in the complex plane via $\Gamma = 2i \Delta^{(0)}$ \cite{dinc2019exact}. 
\begin{claim}
The complete set of the poles of the scattering parameters is given by the values of $\Delta_k$ for which the matrix $A$ is singular.
\end{claim}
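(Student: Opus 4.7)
The claim is essentially a Cramer's-rule statement for the scattering problem, so the plan is to reduce it to linear algebra. First I would collect all the scattering amplitudes $\{t^{(n)}_{\vec\sigma}, r^{(n)}_{\vec\sigma}, e_{\vec\sigma}\}$ into a single column vector $\vec{x}$ of length $(2d+1)N$, together with boundary amplitudes that are fixed by the incoming wave (the B.C.\ terms in Eq.~\eqref{eq:appendixboundary}). Then Eq.~\eqref{eq:eom}, together with those boundary conditions, can be written as a single linear system
\begin{equation*}
  A(\Delta_k)\,\vec{x} \;=\; \vec{b},
\end{equation*}
where every nonzero entry of $A$ is either a constant, a phase $e^{\pm ika}$, $\sqrt{\gamma_n/2}$, or $-\Delta_k$ (the only piece that depends on the spectral variable), and $\vec{b}$ encodes the nonzero input amplitude.

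Next I would apply Cramer's rule. Each scattering parameter reads $x_j(\Delta_k) = \det(A_j(\Delta_k))/\det(A(\Delta_k))$, where $A_j$ replaces the $j$th column of $A$ by $\vec{b}$. Since the entries of $A$ and $A_j$ are polynomials (or entire functions via the linearization $ka\simeq \theta$) of $\Delta_k$, every $x_j$ is a rational function whose poles are contained in the zero set of $\det A$. This immediately gives one inclusion: every pole of a scattering parameter is a value of $\Delta_k$ at which $A$ is singular.

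For the reverse inclusion I would argue that at a zero $\Delta_k^{(0)}$ of $\det A$ there is a nontrivial null vector $\vec{v}$ of $A(\Delta_k^{(0)})$, corresponding to a physical single-excitation eigenmode (a collective decay mode). One then needs the numerator $\det(A_j(\Delta_k^{(0)}))$ to be nonzero for at least one $j$, so that the pole is not cancelled by a coincident zero. This follows by choosing $\vec{b}$ generically: the null mode couples non-trivially to the external input (otherwise that mode would be perfectly dark for every incoming wave, contradicting completeness of the single-excitation eigenbasis, cf.\ Ref.~\cite{dinc2019exact}), and for such $\vec{b}$ the Fredholm alternative forces $\vec{b}\notin \mathrm{Range}(A(\Delta_k^{(0)}))$, which in turn forces at least one $\det(A_j)$ to be nonzero. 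Equivalently, one can invoke the adjugate identity $A\,\mathrm{adj}(A) = \det(A)\,\mathbb{I}$: near $\Delta_k^{(0)}$, $\mathrm{adj}(A)\vec{b}$ is generically nonzero while $\det(A)$ vanishes, producing a genuine pole.

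The main obstacle in writing this rigorously is the genericity step — ruling out accidental cancellations between $\det(A_j)$ and $\det(A)$ at a particular pole. I would handle this by showing that the null-space dimension at $\Delta_k^{(0)}$ matches the order of vanishing of $\det A$ (so the pole order obtained from $\mathrm{adj}(A)/\det(A)$ is strictly positive), and by observing that if $\det(A_j)$ vanished at $\Delta_k^{(0)}$ for \emph{every} choice of input channel encoded in $\vec{b}$, the corresponding collective mode would be completely decoupled from every waveguide port, contradicting the fact that each such mode is the single-excitation eigenstate associated with a physical decay rate $\Gamma = 2i\Delta_k^{(0)}$ of the open system.
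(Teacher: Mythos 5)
Your first step --- assembling Eq.~(\ref{eq:eom}) together with the boundary data into $A(\Delta_k)\vec{x}=\vec{b}$ and concluding via Cramer's rule that every pole of a scattering parameter lies in the zero set of $\det A$ --- is exactly the content of the paper's proof, merely phrased through $\det(A_j)/\det(A)$ instead of through the finiteness of $A^{-1}\vec{b}$ when $A$ is invertible; the two arguments are equivalent. Where you genuinely depart from the paper is the reverse inclusion. The paper's proof is only the contrapositive of the forward direction and never shows that every zero of $\det A$ is realized as a pole; you attempt this, and you correctly isolate the dangerous step: ruling out cancellation, i.e.\ showing $\mathrm{adj}\bigl(A(\Delta_k^{(0)})\bigr)\vec{b}\neq 0$, equivalently $\vec{b}\notin\mathrm{Range}\bigl(A(\Delta_k^{(0)})\bigr)$.

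However, your genericity argument for that step does not go through in general, and the obstruction is physical rather than technical: the dark states and bound states in continuum discussed in this very paper are precisely collective modes in $\ker A$ that decouple from every waveguide port. At $\theta=m\pi$ and $\Delta_k=0$ the matrix $A$ is singular, yet the inhomogeneous scattering problem remains solvable (non-uniquely) because $\vec{b}$ does lie in the range of $A$; the scattering amplitudes stay finite there, and that zero of $\det A$ is not a divergence of any scattering parameter for any input channel. So the appeal to ``otherwise the mode would be completely decoupled from every port, contradicting physicality'' fails --- such completely decoupled modes exist and are a central object of the paper. The converse inclusion therefore holds only if one either excludes the real-axis (BIC) zeros or, as the paper implicitly does, \emph{defines} the collective decay rates through the characteristic equation $\det A=0$ rather than through literal divergence of the scattering amplitudes. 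Your forward inclusion is sound and matches the paper; the converse needs this caveat.
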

\begin{proof}
First, let us denote $\Delta^{(0)}$ as the $\Delta_k$ values for which $A$ is singular. Moreover, let us introduce the notation $A=A(\Delta_k)$. For this sketch, we use proof by contra-positive, meaning we shall show that if $A$ is non-singular, then $\Delta_k \notin \Delta^{(0)}$. 

If $A$ is non-singular, then it is invertible. Let us call denote the inverse matrix by $A^{-1}$. Then, the solution to the matrix equation can be given as
\begin{equation}
    x = A^{-1} b.
\end{equation}
If $A$ is invertible, then all entries of $A^{-1}$ are finite. Similarly, all entries of $b$ are finite by construction. Therefore, the scattering parameters are finite, since multiplication of two finite-dimensional matrices with finite entries results in a matrix with finite entries. By definition, if $\Delta_k \in \Delta^{(0)}$, then the scattering parameters should diverge by the existence of a pole, leading to a contradiction. Hence $\Delta_k \notin \Delta^{(0)}$. 
\end{proof}

This result simplifies our search for the poles, as we no longer need to solve the system of equations. Moreover, this method gives us important information regarding the maximum number of poles. Specifically, if the matrix $A$ is singular, then its determinant is zero. Now, the determinant of matrix $A$ is a polynomial with a degree of $N$, and hence there are at most $N$ poles. This result is in line with the findings of the literature so far~\cite{1dchain,dinc2019exact}, since the number of poles is expected to be bounded by the number of atoms inside the system.

To find the poles of $A$, one can solve the condition $\det(A)=0$. While the determinant algorithm provides useful insight, its implementation is cumbersome as the determinant of the $A$ matrix is a highly oscillating function of $\Delta k$. Fortunately, one can probe the singularity of a matrix by its eigenvalues since a matrix $A$ is singular if it has a zero eigenvalue. Thus, any pole $\Delta^{(0)}_k$ satisfies the property
\begin{equation}
     \Delta^{(0)}_k=\arg   \min_{\Delta k}  \abs{\text{eig}(A)}.
\end{equation}
We shall denote the algorithm using this approach as the ``eigenvalue method". 
While this method provides accurate results, it is slow. One can speed up the process by simply considering the condition number of the matrix $A$ instead of its whole eigenvalue spectrum. Then, the pole can be given by the so called ``condition number method" (CNM) as
\begin{equation}
     \Delta^{(0)}_k=\arg   \max_{\Delta k}  \text{cond}(A).
\end{equation}
This equation gives a pole depending on the seeding of the algorithm. By seeding the algorithm many times, one can find all $N$ poles. For the examples we consider in this paper, the poles are distinct, which is convenient for us when we show that DRoP works. On the other hand, there is no need for the poles to be distinct in the formulation of DRoP. The eigenvalue and condition number methods give the same results within numerical precision. Throughout this paper, we use the CNM to find the collective decay rates of a high-dimensional quantum network, since it is fastest. 

\begin{figure*}
    \centering
    \subfigure[]{\includegraphics[width=8cm]{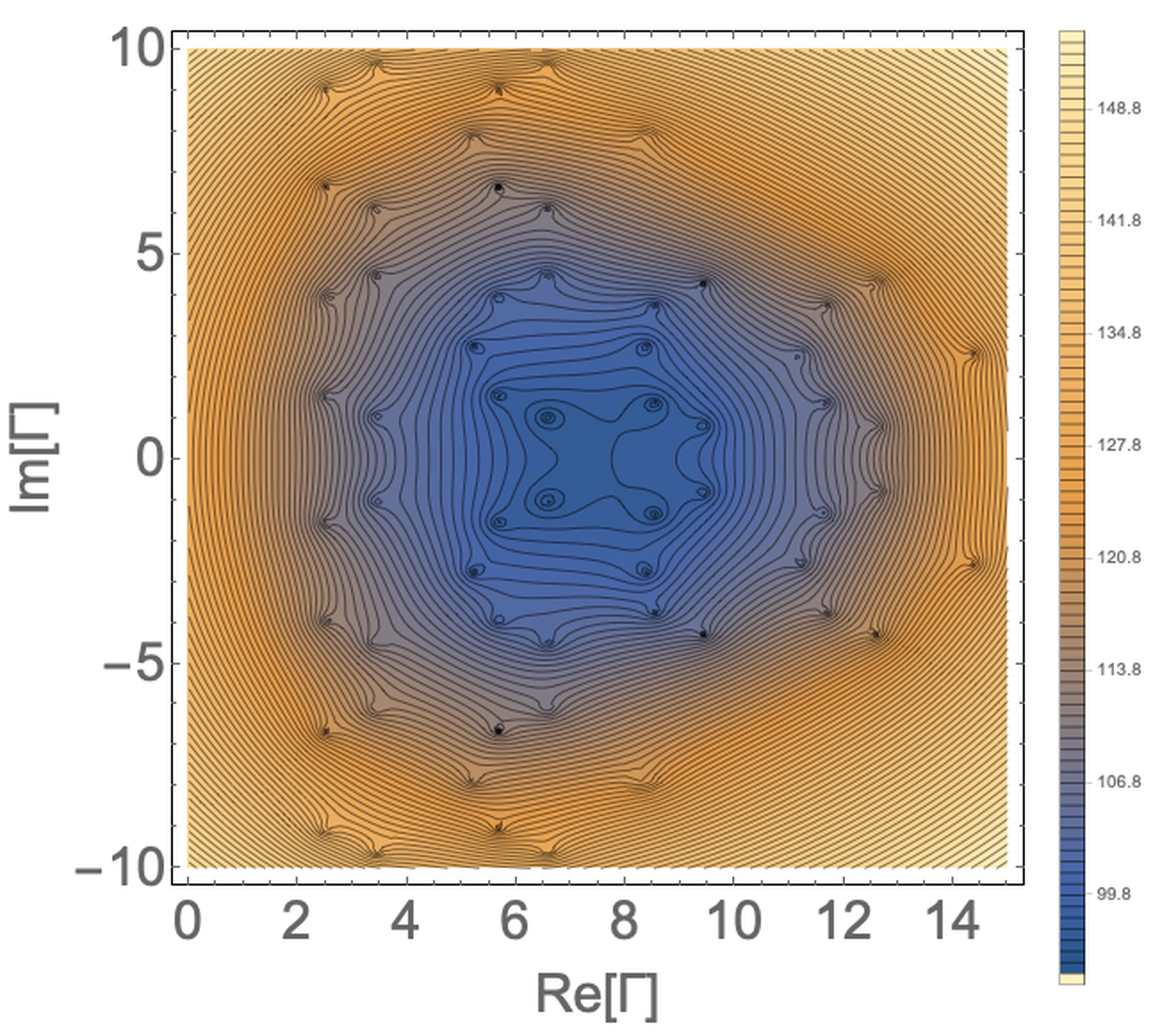}}
    \subfigure[]{\includegraphics[width=8cm]{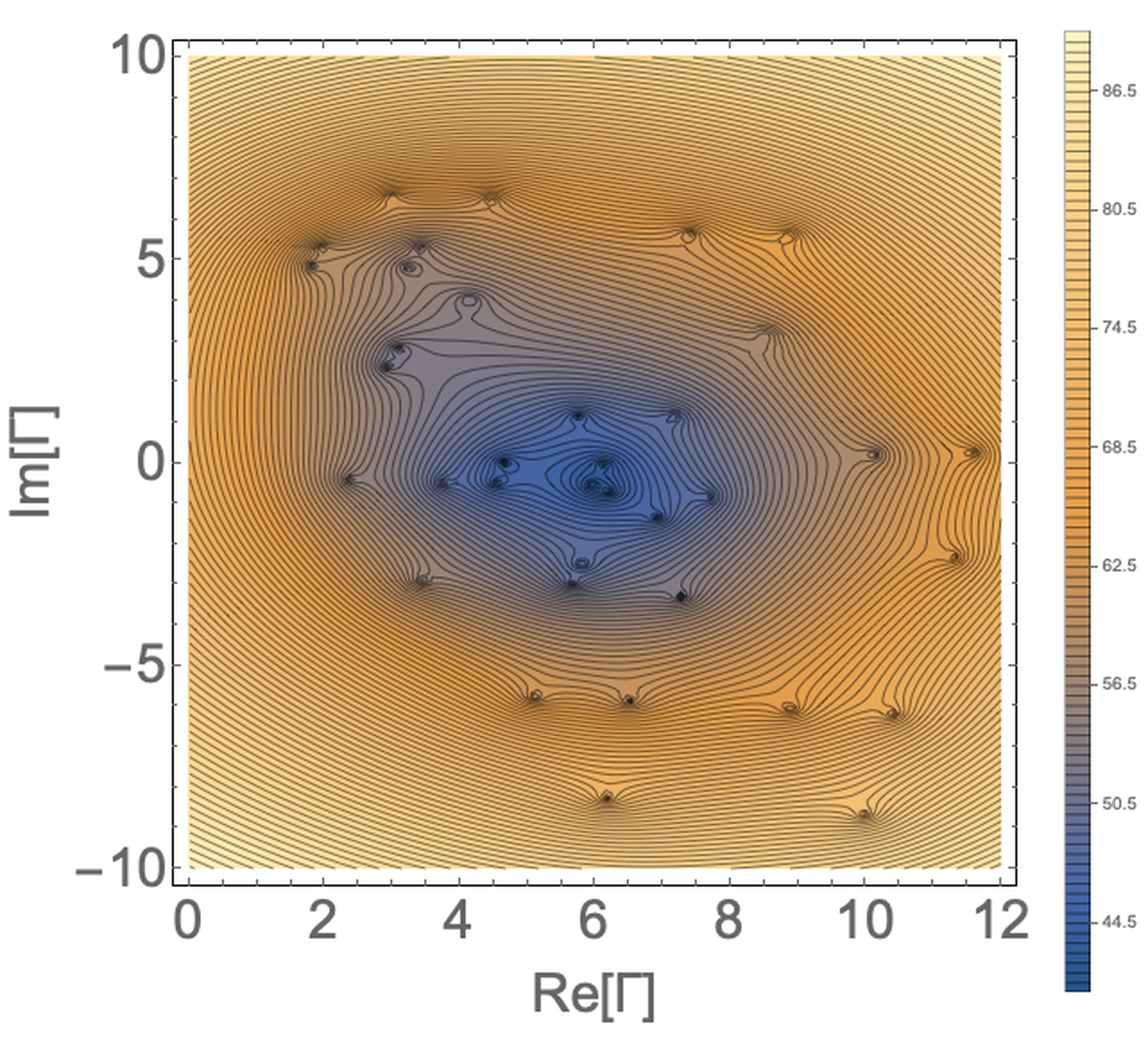}}
    \caption{The numerical values of $\log(|\det[A(\Gamma=2i\Delta_k)]|)$ (in units of $\gamma_1$) for two distinct cases: (a) $5\times 3\times 4$, $\gamma_2/\gamma_1=4$, $\gamma_3/\gamma_1=2$ and $\theta=0.5\pi$ with $\epsilon_{\rm max}=0$, (b) $3\times 2 \times 6$, $\gamma_2/\gamma_1=3$, $\gamma_3/\gamma_1=2$ and $\theta=0.65\pi$ with $\epsilon_{\rm max}=0.05$ (see App.~\ref{sec:appendix_noise} for the definition of $\epsilon_{\rm max}$) . Note that these numerical plots correspond to configurations given in Fig. \ref{fig:figure1b} and Fig. \ref{fig:noise}, respectively.}
    \label{fig:appendix}
\end{figure*}

In this work, we use the predictions made by the DRoP method to seed the condition number method, which is then used to find the poles of the quantum network. One might be concerned that the search algorithm finds a local minimum rather than a minimum corresponding to a zero. However, this concern can be addressed using the minimum modulus principle of complex analysis: Since $f(\Delta_k)=\det(A)$ is an analytical function of $\Delta_k$ (specifically here, a polynomial),
$|f|$ can only have local minima at the position of its zeros.

To check whether the results from the CNM are valid, we compare three different methods. In the first method, we consider the log-absolute determinant value of the matrix $A$ and show that there are indeed $N$ distinct minima, each corresponding to a pole, as plotted in Fig.~\ref{fig:appendix}. Second, we use the eigenvalue method to verify the results found by the CNM. Finally, we use Mathematica's \texttt{NRoots} function when possible to provide another check for our results. In all cases, the values found agree with the ones predicted by DRoP conjecture.

Having shown how the numerical approaches mentioned in the text work, let us now focus our attention on finding the collective decay rates of $N$ qubits in a linear chain efficiently. First, we present a compact two-equation system that includes all information about such decay rates. We then propose another method to find a polynomial characteristic equation of degree $N$. 

When $d=1$, the scattering problem has been solved via the transfer matrix method in Ref.~\cite{1dchain}. According to calculations performed in this reference, the set of equations describing the poles are given by
\begin{subequations}
\begin{align}
    \cos(\lambda) &= \cos(\theta) - \frac{\gamma}{2\Delta_k^{(0)}} \sin(\theta), \\
    (\Delta_k^{(0)} +i \gamma/2) \sin(N \lambda)&= \sin((N-1)\lambda) \Delta_k^{(0)} \exp(i\theta).
\end{align}
\end{subequations}
Here, $0 \leq Re[\lambda] \leq \pi$ is a complex parameter, $\gamma$ is the single qubit decay rate and $\Delta_k^{(0)}$ are the poles of the scattering parameters. This result shows that one can define dimensionless poles such that $z_p=\Delta_k^{(0)}/\gamma$ is the same for any $\gamma$, which means that the collective decay rates of a linear chain of atoms depend on $\gamma$ only linearly. Thus, we can define the dimensionless decay rates that describe $N$ qubit in a chain, regardless of the specific value of $\gamma$. 

In order to obtain analytical expressions for the decay rates, below we find a polynomial characteristic equation. To do so, we use the transfer matrix method, but employ a different approach than in Ref.~\cite{1dchain} for higher computational efficiency. The transfer matrix for a single unit cell, which includes a qubit and a propagation phase $\theta= \Omega a$, is
\begin{equation}
    \begin{pmatrix}
    t_{j-1} \\
    r_{j-1}
    \end{pmatrix}
    = 
    S
        \begin{pmatrix}
    t_{j} \\
    r_{j}
    \end{pmatrix}
    \implies
    \begin{pmatrix}
    t_{j-1} \\
    r_{j-1}
    \end{pmatrix}
    = 
    \begin{pmatrix}
   \left(1 + i \chi_k \right) e^{-i\theta} & i \chi_k e^{i\theta} \\
   -i \chi_k e^{-i\theta} &  \left(1 - i \chi_k \right) e^{i\theta}
    \end{pmatrix}
        \begin{pmatrix}
    t_{j} \\
    r_{j}
    \end{pmatrix},
\end{equation}
where $\chi_k = \gamma/(2\Delta_k)$. Here, $t_{j}$ and $r_j$ are the transmission and reflection coefficients for the $j$th atom. By construction, $t_0=1$ and $r_N=0$. Then, one can relate the output field amplitudes as
\begin{equation}
        \begin{pmatrix}
    1 \\
    r_0
    \end{pmatrix}
    = 
    S^N
        \begin{pmatrix}
    t_N \\
    0
    \end{pmatrix}.
\end{equation}
From this relation, one can find the final transmission coefficient as
\begin{equation}
    t_N = \frac{1}{(S^N)_{11}}.
\end{equation}
The characteristic polynomial describing the poles of the system is
\begin{equation}
    (S^N)_{11}(\chi_k^{(0)})=0,
\end{equation}
where $(S^N)_{11}(\chi_k^{(0)})$ is a polynomial with degree $N$ in terms of $\chi_k^{(0)}$. Note that once $\chi_k^{(0)}$ are known, $\Delta_k^{(0)}$ can be easily found.

\section{BIC in multi-dimensional quantum networks} \label{sec:appendixe}
In this Appendix, we prove the Eq. (\ref{eq:bic}) in the main text. Let us start by adding the first two expression in Eq.~(\ref{eq:eom}) such that
\begin{align} 
    t_{\vec\sigma + a\hat{n}}^{(n)} e^{-i\theta} + r_{\vec\sigma + a\hat{n}}^{(n)} e^{i\theta} =t_{\vec \sigma}^{(n)} + r_{\vec \sigma}^{(n)}.
\end{align}
This expression represents the emergence of wave-function continuity of the photonic field at the atomic positions. Now since, by the definition of BIC, the photonic field is zero outside the system, the sum of field amplitudes is always zero at the atomic positions by this continuity. Hence, from Eq.~(\ref{eq:eom}c), we find the first condition of BIC, namely
\begin{equation}
    \Delta_k = 0 \implies E_k = \Omega.
\end{equation}
All bound-states have the energy $E_k=\Omega$. 
Applying the second condition $\Omega L = m \pi$, where $m$ is a non-negative integer, we obtain the set of equations
\begin{subequations}
\begin{align} 
    (-1)^m t_{\vec\sigma + a\hat{n}}^{(n)}  -t_{\vec \sigma}^{(n)} + i \sqrt{\gamma_n/2} e_{\vec \sigma} =0,\\
   (-1)^m r_{\vec\sigma + a\hat{n}}^{(n)}  -r_{\vec \sigma}^{(n)} - i \sqrt{\gamma_n/2} e_{\vec \sigma} =0.
\end{align}
\end{subequations}
These equations are decoupled in each direction, and along a single dimension they mirror the 1D equations of motion. Therefore, following Ref.~\cite{dinc2019exact}, we find that the condition of BIC is simply the 1D conditions applied along each line inside the network, which leads to Eq.~(\ref{eq:bic}) of the main text.

\end{document}